\newtheorem{lemma}{Lemma}
\theoremstyle{remark}
\newtheorem{definition}{Definition}
\newtheorem{claim}{Claim}
\newtheorem{example}{Example}
\newcommand{\lroots}{\Lambda}
\newcommand{\B}{B}
\begin{document}
\title{Beyond the Runs Theorem}
\author{
Johannes Fischer}
\address[J. Fischer, Tomohiro I]{Department of Computer Science, TU Dortmund, Germany}
\email{johannes.fischer@cs.tu-dortmund.de}
\email{tomohiro.i@cs.tu-dortmund.de}

\author{\v St\v ep\' an Holub}
\address[\v{S}. Holub]{Department of Algebra, Charles University, Sokolovsk\'a 83, 175 86 Praha, Czech Republic}
\email{holub@karlin.mff.cuni.cz}

\author{Tomohiro I}

\author{Moshe Lewenstein}

\address[M. Lewenstein]{Department of Computer Science, Bar-Ilan University}
\email{moshe@cs.biu.ac.il}

\subjclass{68R15}
\keywords{runs, Lyndon words, combinatorics on words}
\thanks{J. Fisher and M. Lewenstein supported by a Grant from the GIF, the German-Israeli Foundation for Scientific Research and Development}
\thanks{\v{S}. Holub supported by the Czech Science Foundation grant number 13-01832S}

\begin{abstract}
In \cite{runstheorem}, a short and elegant proof was presented showing that a binary word of length $n$ contains at most $n-3$ runs. Here we show, using the same technique and a computer search, that the number of runs in a binary word of length $n$ is at most $\frac{22}{23}n<0.957n$. 
\end{abstract}
\maketitle

\section{Introduction}
The research on the possible (maximal) number of runs in a word of length $n$ dates back at least to \cite{KK}. Since then, there where two types of efforts: finding words rich of runs, and proving an upper bound on the number of runs. Both efforts were accompanied by a heavy use of computer search.  An (at least psychologically) important barrier was the question whether the number of runs can be larger than the length of the word, and the negative answer was known as ``the runs conjecture''. The barrier was broken, turning the conjecture into a theorem, by a  remarkably simple and computer-free proof in \cite{runstheorem}.
In this paper we continue the narrowing of the gap between the two bounds. We build essentially on the technique leading to the beautiful proof of the Runs Theorem, adding again some computer backing.

For the more detailed description of the history of the problem and for an extensive list of literature, see for example \cite{Crochemore,runstheorem}.

\section{Runs and Lyndon roots}
For any word $u$, an integer $p$ with $1 \le p \le |u|$ is said to be a \emph{period} of $u$ if $u[i] = u[i+p]$ for all $1 \leq i \leq |u|-p$.
Especially, the smallest period of $u$ is called \emph{the period} of $u$.
A prefix $v$ of $u$ that is also a suffix of $u$ is said to be a \emph{border} of $u$.
The empty word and $u$ are trivial borders of $u$.
We call $u$ \emph{unbordered} if there is no border other than trivial ones.

Given a word $w$, we say that an interval $[i..j]$ with $1\leq i\leq j\leq |w|$ is \emph{period-maximal in $w$} if $w[i..j]$ has no extension in $w$ with the same period. That is, if $1\leq i'\leq i\leq j\leq j'\leq |w|$ is such that $w[i..j]$ and $w[i'..j']$ have the same period, then $i=i'$ and $j=j'$.
A period-maximal interval is said to be \emph{left-open} if $i = 1$, otherwise it is \emph{left-closed}. Similarly, a period-maximal interval is \emph{right-open} or \emph{left-closed} depending on whether or not $j = |w|$.
If $1 < i$ and $j < |w|$, the interval is said to be \emph{closed}.
A period-maximal interval is a \emph{run} if its length is at least double of the period $p$ of $w[i..j]$, that is $j-i+1\geq 2p$.

We shall work with the two-letter alphabet $\{0,1\}$, which allows two lexicographic orders: $\prec_0$ is defined by $0\prec_0 1$, and $\prec_1$ by $1\prec_1 0$.
We shall write $\bar a=1-a$.
A word $v$ is said to be a \emph{Lyndon word} with respect to some order $\prec$ if and only if $w\prec u$ for any nonempty proper suffix $u$ of $w$. In particular, Lyndon words are unbordered. We say that a Lyndon word $v$ is a \emph{Lyndon root} of $w$ if $v$ is a factor of $w$ and $|v|$ is the period of $w$.

A right-closed period-maximal interval $[i..j]$ of $w$ is said to be \emph{$a$-broken} in $w$, where $a=w[j+1]$. We will also say, a bit imprecisely, that the period of $w[i..j]$ is broken by $a$.

Let $\rho(n,2)$ denote the maximal number of runs in a binary word of length $n$.
\medskip

The basic idea of~\cite{runstheorem} is to associate an $a$-broken run $r=[i..j]$ with the set $\lroots(r)$ of intervals corresponding to the Lyndon root of $w$ with respect to the order $a\prec \bar a$, excluding from $\lroots(r)$, if necessary, the interval starting at the beginning of $r$. This definition has to be completed to cover also runs that are not broken, that is, right-open runs. For those runs, the set $\lroots(r)$ can be defined as consisting of Lyndon roots with respect to both orders. In \cite{runstheorem}, the case of unbroken runs is solved by appending a special symbol $\$$ to the end of $w$, which is equivalent to arbitrarily choosing one of the orders (the order $0\prec 1$ in their case).

Let ${\tt Beg}(S)$ denote the set of starting positions of intervals in the set $S$, and let $\B(r) = {\tt Beg}(\lroots(r))$ for any run $r$.
The crucial fact, implying instantaneously that there are at most $|w|-1$ runs, is that $\B(r)$ and $\B(r')$ are disjoint for $r\neq r'$.
At no cost, it is possible to make this basic tool a bit stronger. 
For sake of clarity, let us first give a formal definition.

\begin{definition}\label{def:lroots}
Let $w$ be a binary word. Let $s=[i..j]$ be a period-maximal interval in $w$ with period $p$. Then $\lroots(s)$ denotes the set of all intervals $[i'..j']$ of length $p$ such that $i< i'\leq j'\leq j$ and $w[i'..j']$ is a Lyndon word with respect to an order $\prec$ satisfying the following condition: if $j<|w|$ and $[i..j]$ is $a$-broken in $w$, then $a\prec \bar a$ (the condition being empty if $[i..j]$ is not broken). Also, let $\B(s) = {\tt Beg}(\lroots(s))$.
\end{definition}

\begin{example}
Take a word $w = 1110101101$ of length $10$.
For a period-maximal interval $s_1=[1..3]$ with period $1$, $\lroots(s_1) = \{ [2..2], [3..3] \}$.
For a period-maximal interval $s_2=[3..7]$ with period $2$, $\lroots(s_2) = \{ [5..6] \}$.
For a period-maximal interval $s_3=[5..10]$ with period $3$, $\lroots(s_3) = \{ [6..8], [7..9] \}$.
Note that $w[6..8] = 011$ and $w[7..9] = 110$ are Lyndon words w.r.t. $\prec_{0}$ and $\prec_{1}$, respectively.
For a period-maximal interval $s_4=[2..10]$ with period $5$, $\lroots(s_4) = \{ [4..8] \}$.
\end{example}

The following lemma is now stronger than the corresponding \cite[Lemma~8]{runstheorem} in two ways. First, it applies also to period-maximal intervals that are not runs, and second, as noted above, $\lroots(r)$ is defined more generously for unbroken runs. The proof, however, is the same.

\begin{lemma}\label{lem1}
Let $s$ and $t$ be two distinct period-maximal intervals in $w$. Then $\B(s)$ and $\B(t)$ are disjoint.
\end{lemma}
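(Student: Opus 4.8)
The plan is to prove the contrapositive: I would assume that some starting position $k$ lies in both $\B(s)$ and $\B(t)$, and derive that $s = t$. By definition of $\B$, this means there is a Lyndon root $[k..k+p-1]$ of $s$ (so $p$ is the period of $s$ and $w[k..k+p-1]$ is Lyndon with respect to an admissible order $\prec_s$) and a Lyndon root $[k..k+q-1]$ of $t$ (with $q$ the period of $t$, Lyndon with respect to $\prec_t$). Both intervals start at the same position $k$ and both lie strictly inside their respective period-maximal intervals.

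The first key step is to compare the two orders $\prec_s$ and $\prec_t$. I expect that they must coincide, because otherwise one of the words $w[k..k+p-1]$, $w[k..k+q-1]$ would be Lyndon for $\prec_0$ and the other for $\prec_1$, which constrains their first letters in opposite directions. Concretely, a Lyndon word for $\prec_a$ of length at least $2$ must begin with the $\prec_a$-minimal letter $a$; so if the two orders differed, the two words would start with different letters, contradicting that they share the position $k$ (hence the same letter $w[k]$). The only escape is a length-one root, and I would handle that boundary case separately, checking it is consistent or otherwise forces equality of the intervals directly. Once the orders agree, write $\prec$ for the common order.

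With a single order $\prec$ fixed, the heart of the argument is the standard fact that two Lyndon words (for the same order) that begin at the same position are comparable by the prefix relation: if $p \le q$ then $w[k..k+p-1]$ is a prefix of $w[k..k+q-1]$, and a Lyndon word cannot have a proper Lyndon prefix of the same ``type'' without violating the strict-suffix inequality — so in fact $p = q$. This is where I anticipate the main obstacle: pinning down exactly why the shorter Lyndon word being a prefix of the longer one is impossible unless they are equal, using the period structure. The cleanest route is to use that $w[k..k+p-1]$ is a period of the longer interval, so if $p < q$ the factor $w[k..k+q-1]$ would have period $p < q$, but $q$ is supposed to be \emph{the} (smallest) period of $t$, giving $q \le p$, a contradiction; hence $p = q$.

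Finally, once $p = q$ and the two Lyndon roots are literally the same interval $[k..k+p-1]$ with the same period, I would show the two period-maximal intervals $s$ and $t$ must coincide. Both contain $[k..k+p-1]$ and both have period $p$; extend $[k..k+p-1]$ maximally in $w$ keeping period $p$. By the definition of period-maximality, there is a \emph{unique} such maximal extension, and both $s$ and $t$ equal it. (The strict inequality $i < i'$ in Definition~\ref{def:lroots}, which excludes the root starting exactly at the left end, plays no adverse role here, since the shared root $[k..k+p-1]$ sits strictly inside both.) Therefore $s = t$, contradicting the assumption that they are distinct, and the lemma follows.
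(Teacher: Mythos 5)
Your proposal has two genuine gaps, and together they skip over the actual content of the lemma. The first is in your middle step: from $p<q$ you conclude that $w[k..k+q-1]$ inherits the period $p$. That inheritance is valid only if the longer root stays inside $s$, i.e.\ if $k+q-1\le j_s$, because $p$ is a period of $w[i_s..j_s]$ only; beyond $j_s$ the period $p$ is broken, by the very definition of period-maximality. The case $k+q-1\le j_s$ is indeed easy (it is the paper's one-line remark that $w[k..k+q-1]$, being Lyndon and hence unbordered, cannot have the smaller period $p$), but the complementary case $k+q-1>j_s$ is where all the work lies: there $s$ must be right-closed, hence $a$-broken with $a=w[j_s+1]$, and Definition~\ref{def:lroots} ties the admissible order for $s$'s roots to this breaking letter. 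The paper then compares suffixes across the break (using $w[k+p..j_s+1]\prec_a w[k..j_s+1]$, which follows from the period of $s$ being broken by $a$) to conclude that $w[k..k+q-1]$ is Lyndon for the \emph{opposite} order $\prec_{\bar a}$, hence starts with $\bar a$, which in turn forces the root of $s$ to be the single letter $\bar a$ and $s$ to be a block of $\bar a$'s. None of this is covered by your argument; your two cases (``orders coincide'' and ``a root of length one'') are not independent of the dichotomy $k+q-1\le j_s$ versus $k+q-1>j_s$, and the hard half of that dichotomy is exactly what you assume away.

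The second gap is that you dismiss the length-one root case as a boundary case to be ``checked separately,'' when it is precisely the configuration that the hard case reduces to, and the only place where the lemma can fail. The contradiction there needs both strict inequalities $i_s<k$ and $i_t<k$ from Definition~\ref{def:lroots}: $i_s<k$ gives $w[k-1]=\bar a$ (position $k-1$ lies in the $\bar a$-block $s$), and $i_t<k$ lets you apply the period $q$ of $t$ at position $k-1$ to get $w[k+q-1]=w[k-1]=\bar a$, contradicting that a $\prec_{\bar a}$-Lyndon word of length at least $2$ must end with $a$. Your parenthetical claim that the exclusion of the first position ``plays no adverse role here'' exposes the problem: your proof never uses that exclusion, but the lemma is \emph{false} without it. The paper's own example, the word $1101011$ with $k=2$ (the root $1$ of the run $11$, and the root $10$ of $10101$ starting at its first position), gives two distinct period-maximal intervals whose root sets would share a starting position if roots at the first position were allowed. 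Since any correct proof must invoke the exclusion somewhere, your outline cannot be completed without adding essentially the entire argument of the paper.
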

\begin{proof}
Let $s=[i_s..j_s]$ and $t=[i_t..j_t]$.
Suppose that $k\in \B(s)\cap \B(t)$, and let $[k..m_s]\in \lroots(s)$ and $[k..m_t]\in \lroots(t)$.
If $m_s=m_t$, then $s$ and $t$ have the same period and $s=t$. We can therefore, w.l.o.g., suppose that $m_s< m_t$. Then $w[k..j_s]$ has a smaller period than the unbordered $w[k..m_t]$, which implies that $j_s<m_t$. Therefore $s$ is $a$-broken with $a=w[j_s+1]$.

Since $a$ breaks the period of $w[k..j_s]$, 
we have $w[m_s+1..j_s+1] \prec_a w[k..j_s+1]$. 
Since both $w[m_s+1..j_s+1]$ and $w[k..j_s+1]$ are factors of $w[k..j_t]$, we deduce that $w[k..m_t]$ is Lyndon w.r.t. $\prec_{\bar a}$. Note that $w[k..j_t]$ contains both letters. Therefore $w[k]=\bar a$, and the $\prec_a$-minimality of $w[k..m_s]$ implies that $w[i_s..j_s]\in {\bar a}^+$. The definition of $\lroots(s)$ yields $i_s<k$ and $i_t<k$, which leads to a contradiction with $\prec_{\bar a}$-minimality of $w[k..m_t]$. 
\end{proof}

\begin{example}
It is worth noting that the appearance of $\bar a^+$ in the previous proof is significant, and it is the place where we use the prohibition of the very first position of a run. Without this condition, Lemma~\ref{lem1} would not hold. Consider the word $1101011$ and position $2$, which is the starting point of the Lyndon root $1$ of the run $11$ and the starting point of the Lyndon root $10$ of $10101$, the latter being excluded by the prohibition. 
\end{example}

Lemma~\ref{lem1} implies that for each position $k$ there is at most one period-maximal interval $s$ such that $k\in \B(s)$.
% Such an $s$ can be found using the following rules (compare with \cite[Lemma~6]{runstheorem}).
% For any $k > 1$ (with possibly one exception), such an $s$ can be found using the following rules.
Such an $s$ can be found using the following rules.

\begin{lemma}\label{llyndon}
Let $k>1$ be a position of $w$ such that $w[k] = a$ and $w[k-1..|w|] \neq \bar a a^+$. Then $k\in \B(s)$ where $s=[i..j]$ is the period-maximal extension of 
\begin{itemize}
  \item $[k..k]$, if $w[k]=w[k-1]$;
  \item $[k..k']$, where $w[k..k']$ is the longest Lyndon word with respect to $\prec_a$ starting at the position $k$, otherwise.
\end{itemize}
\end{lemma}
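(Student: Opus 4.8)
The plan is to exhibit, in each of the two cases of the statement, a concrete interval $[k..m]\in\lroots(s)$ that begins at $k$; by Definition~\ref{def:lroots} this immediately yields $k\in\B(s)$. Recall that $[k..m]\in\lroots(s)$ demands exactly three things: that $m-k+1$ equals the period $p$ of $s$, that $i<k\le m\le j$, and that $w[k..m]$ is Lyndon with respect to some order compatible with how $s$ is broken. I would first dispose of the easy case $w[k]=w[k-1]=a$. Here $s$ is the maximal block of $a$'s through $k$, so $p=1$ and the natural candidate is $[k..k]$. A single letter is Lyndon with respect to both $\prec_a$ and $\prec_{\bar a}$, so the order condition is satisfiable no matter what breaks $s$; and since $w[k-1]=a$ the block already reaches position $k-1$, giving $i\le k-1<k$. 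Hence $[k..k]\in\lroots(s)$.

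For the main case $w[k]=a\neq w[k-1]$, let $L=w[k..k']$ be the longest $\prec_a$-Lyndon word starting at $k$ and take $[k..k']$ as candidate. Being Lyndon, $L$ is unbordered, so its period — and therefore the period $p$ of its period-maximal extension $s$ — equals $|L|=k'-k+1$, which settles the length requirement for free. Two things remain: $i<k$ and the legality of $\prec_a$ for $s$. For the former I would invoke the hypothesis $w[k-1..|w|]\neq\bar a a^+$: if $|L|=1$ then $w[k..|w|]$ can contain no $\bar a$ (otherwise a prefix of the form $a^t\bar a$ would be a strictly longer $\prec_a$-Lyndon word), i.e. $w[k..|w|]=a^+$, which together with $w[k-1]=\bar a$ is precisely the excluded pattern. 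Thus $|L|\ge 2$, so the $\prec_a$-Lyndon word $L$ ends in $\bar a$; hence $w[k']=\bar a=w[k-1]$, and since $k'=k-1+p$ the period $p$ extends one step to the left, giving $i\le k-1<k$.

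The legality of $\prec_a$ is the crux and the only place where real work is needed. It is automatic when $s$ is right-open, so suppose $s=[i..j]$ is right-closed and broken by $b=w[j+1]$; I must show $b=a$. Assume for contradiction $b=\bar a$. Since $w[k..j]$ has period $p$ and prefix $\lambda:=L$, it is a prefix of $\lambda^\omega$, and the break says the periodic continuation at $j+1$ would be $a$ while the actual letter is $\bar a$. I would then prove the self-contained claim: if $z$ is a prefix of $\lambda^\omega$ for a $\prec_a$-Lyndon $\lambda$ and $c$ is $\succ_a$-larger than the letter following $z$ in $\lambda^\omega$, then $zc$ is $\prec_a$-Lyndon. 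Applied to $z=w[k..j]$ and $c=\bar a$, this makes $w[k..j+1]$ a $\prec_a$-Lyndon word strictly longer than $L$, contradicting the maximality of $L$ and forcing $b=a$.

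Proving that auxiliary claim is where I expect the difficulty. The natural route is to compare $zc$ against each proper suffix $z'c$, where $z'=z[d+1..|z|]$. When $p\nmid d$, the relation $\lambda^\omega\prec_a\sigma^{d}(\lambda^\omega)$ (a restatement of $\lambda$ being Lyndon, with $\sigma$ the shift) settles the comparison at the first mismatch; the delicate point is the boundary situation in which the two strings agree right up to the appended letter, for that would make $z'c$ a border of $zc$ and ruin Lyndon-ness. The key observation ruling this out is that such an agreement would force $\lambda[m-d+1]=\bar a$ while $\lambda[m+1]=a$ (for the relevant prefix length $m$), and then the proper suffix $\lambda[d+1..p]$ would be $\prec_a$-smaller than $\lambda$, contradicting that $\lambda$ is Lyndon. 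The remaining subcases — $p\mid d$ and the one-letter suffix $c$ — are immediate. Combining these shows $zc$ is Lyndon, which closes the argument and, via Lemma~\ref{lem1}, confirms that this $s$ is the unique period-maximal interval with $k\in\B(s)$.
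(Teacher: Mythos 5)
Your proposal is correct and follows essentially the same route as the paper's proof: the same two-case split, the same use of the hypothesis to get $|L|\ge 2$ and $w[k']=\bar a=w[k-1]$ (hence $i<k$), and the same contradiction argument showing that if $s$ were $\bar a$-broken then $w[k..j+1]$ would be a $\prec_a$-Lyndon word strictly longer than $L$. The only difference is organizational: you isolate the extension step as a standalone claim about a prefix of $\lambda^\omega$ followed by a letter larger than the periodic continuation (the standard Duval-type fact), whereas the paper verifies the suffix comparisons inline.
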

\begin{proof}
If $w[k]=w[k-1]$, then $s$ is a run with period one containing the position $k$ with $i < k$.
Hence, $k\in \B(s)$ immediately follows from Definition~\ref{def:lroots}.

Let $w[k-1] = \bar a$, and let $w[k..k']$ be the longest Lyndon word with respect to $\prec_a$ starting at the position $k$.  From $w[k..|w|] \neq a^+$, it is easy to see that $k'\neq k$ and $w[k']=\bar a$, which implies $i < k$.
If $s$ is right-open, we are done:
$k \in \B(s)$ since the condition on $\prec$ is empty (see Definition~\ref{def:lroots}).
It remains to show that $s$ is $a$-broken if it is broken.
Assume to the contrary that $s$ is $\bar a$-broken. We show that $w[k..j+1]$ is a Lyndon word with respect to $\prec_a$.
Let $p$ denote the length of the Lyndon word $w[k..k']$, that is, $p=k'-k+1$. 
Let first $k < h \le k'$. 
% for some positive integer $q$. 
Since $w[k..k']$ is a Lyndon word with respect to $\prec_a$, we have $w[k..k']\prec_a w[h..k']$, and thus also $w[k..j+1] \prec_a w[h..j+1]$.
Let now $k'<h\leq j+1$. As above, $w[k..k']\prec w[h-p+1..h] \prec_a w[h-p+1..j+1]$. Also $w[h-p+1..j+1] \prec_a w[h..j+1]$, since $w[h..j+1]=w[h..j]\bar a$ and $w[h..j]a$ is a prefix of $w[h-p+1..j+1]$. 
Therefore, $w[k..j+1]$ is a Lyndon word, which contradicts that $w[k..k']$ is the longest Lyndon word starting at the position $k$.
\end{proof}

Note that for the position $k$ with $w[k-1..|w|] = \bar a a^+$, there is no period-maximal interval $s$ with $k \in \B(s)$.
An algorithm computing for all positions the longest Lyndon words starting there is discussed in~\cite[Section~4.1]{runstheorem}.

\section{Idle positions}
In order to make explicit the relation between runs and positions, we associate with a run $r$ the position $\max \B(r)$ and say that such a position is \emph{charged} (by $r$). 
We repeat that the Runs Theorem was proved in~\cite{runstheorem} by pointing out that charging is an injective mapping, which is a corollary of Lemma~\ref{lem1}.
This also yields an obvious strategy for further lowering the upper bound on the number of runs. One has to find positions that are not charged in an arbitrary word. We shall call such positions \emph{idle}.
Equivalently, we want to identify a position $i$ satisfying either of the following two conditions.
\begin{enumerate}
	\item $i$ is not contained in $\B(r)$ for any run, or
	\item $i$ is in $\B(r) \setminus \{\max \B(r)\}$ for some run $r$. %(we shall call runs with $|\B(r)|>1$ \emph{overloaded}).
\end{enumerate}

\subsection{Idle positions that are resistant to extensions}
In order to be able to estimate the number of idle positions locally, we are interested in idle positions that remain idle in any extension of $w$. 
One obvious fact is that closed period-maximal intervals are not affected by extensions.
For example, the third position in the word $1010011$ remains idle for any extensions.
That is because the period three of $1001$ is broken by $1$, and the period-maximal extension of $1001$ is $s=[2..6]$ that is closed,
but $s$ is not a run, and Definition~\ref{def:lroots} and Lemma~\ref{lem1} yield that the position is idle.

Also, it is easy to see that runs $r$ with $|\B(r)|>1$ that are right-closed preserve this property in any extension.
However, we have to be careful with right-open runs since some positions in $\B(r)$ may disappear when the run $r$ gets broken by a right-extension.
To clarify this case, let $\lroots_a(r)$ denote the set of Lyndon roots in $\lroots(r)$ that are Lyndon words with respect to $\prec_a$,
and let $\B_a(r) = {\tt Beg}(\lroots_a(r))$.
Note that $\B_a(r) = \B_{\bar a}(r)$ if and only if $r$ is a run with period one.
Now we consider the set $D(w)$ of idle positions $k$ in a word $w$ falling into one of the following cases:
\begin{enumerate}[(a)]
  \item $k \in B(s)$, where $s$ is a closed period-maximal interval that is not a run.
  \item $k \in (B_a(r) \setminus \{ \max B_a(r) \})$, where $r$ is an $a$-broken run.
  \item $k \in (B_a(r) \setminus \{ \max B_a(r) \})$, where $r$ is a right-open run and $a$ is chosen such that
        $\min B_a(r) \ge \min B_{\bar a}(r)$ ($a \in \{0, 1\}$ is arbitrary if its period is $1$).
\end{enumerate}
By $D(w)$ we intend to say that, for any $k \in D(w)$, the position $|u|+k$ in $uwv$ is idle for any extensions $u$ and $v$.
The only exception is the case (c) in which the position $|u|+k$ may not be idle if $r$ is $\bar a$-broken in the extension.
But even in this case we have that at least one of the positions $|u|+k$ and $|u|+k-g$ of $uwv$ is idle,
where $g = \min B_{a}(r) - \min B_{\bar a}(r)$.
Therefore the number of idle positions does not decrease for any extensions.
This is formulated in the following claim.
\begin{claim} \label{D}
Let $w$, $u$ and $v$ be arbitrary binary words. Then
\begin{align*}
	\Big| D(uwv)\cap [|u|+2..|uw|-1] \Big| \geq  |D(w)|.
\end{align*}
\end{claim}
\begin{proof}
  We examine $k \in D(w)$ of each case:
  \begin{itemize}
    \item For Case (a). Since $s=[i..j]$ is closed,
          we have a closed period-maximal interval $s' = [|u|+i..|u|+j]$ in $uwv$.
          Since $s'$ is not a run, $|u|+k$ is in $D(uwv)$.
    \item For Case (b). Since $r=[i..j]$ is an $a$-broken run, 
          we have an $a$-broken run $r' = [i'..|u|+j]$ with $i' \le |u|+i$ in $uwv$.
          Since any Lyndon root in $\lroots(r)$ appears in $\lroots(r')$ (with shift $|u|$), $|u| + k$ is in $D(uwv)$.
    \item For Case (c). Let $r = [i..|w|]$ be a right-open run in $w$. We have a run $r' = [i'..j']$ with $i' \le |u|+i$ and $|uw| \le j'$ in $uwv$.
          Note that $k-g \in \B_{\bar a}(r)$, where $g = \min B_{a}(r) - \min B_{\bar a}(r)$.
    \begin{itemize}
      \item If $r'$ is still open or $a$-broken, any Lyndon root in $\lroots_{a}(r)$ appears in $\lroots_{a}(r')$ (with shift $|u|$),
            and hence, $|u|+k$ is in $D(uwv)$.
      \item If $r'$ is $\bar a$-broken, any Lyndon root in $\lroots_{\bar a}(r)$ appears in $\lroots_{\bar a}(r')$ (with shift $|u|$),
            and hence, $|u|+k-g$ is in $D(uwv)$.
    \end{itemize}
  \end{itemize}
We have described an injective map from $D(w)$ to $D(uwv)\cap [|u|+1..|uw|]$. The map always assigns, for some $a$ and some $r$, a position $k$ in $[1..|w|]\cap B_a(r)$ to the position $k+|u|$. Taking into account that $1$ and $|w|$ are not in any $B_a(r)$, we get the claim.
\end{proof}

This yields the following lemma:
\begin{lemma}\label{lem:lim_ub}
  If $|D(w)| \ge d$ for any binary word $w$ of length $m$, then
  \[\lim_{n \rightarrow \infty} \left(\frac {\rho(n, 2)}{n}\right) \le \frac{m-2-d}{m-2}.\]
\end{lemma}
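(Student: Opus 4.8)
The plan is to combine two ingredients already in hand: the injectivity of charging (a consequence of Lemma~\ref{lem1}), which lets me convert the count of runs into a count of idle positions, and Claim~\ref{D}, which lets me import the $d$ robustly idle positions of every length-$m$ factor into the surrounding word. Concretely, I will first prove the per-length inequality
\[
  \rho(n,2) \;\le\; n - d\left\lfloor\frac{n-2}{m-2}\right\rfloor
\]
for every $n$, and then pass to the limit.

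I would begin by recording the exact relation between runs and idle positions. Fix a word $x$ of length $n$. Each run $r$ charges the single position $\max\B(r)$, and by Lemma~\ref{lem1} the sets $\B(r)$ are pairwise disjoint, so distinct runs charge distinct positions. Hence charging is a bijection from the runs of $x$ onto the set of charged positions, and the number of runs equals $n$ minus the number of idle positions. It therefore suffices to lower-bound the number of idle positions of an arbitrary $x$ of length $n$.

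Next I would tile $x$ by length-$m$ windows. For $j=1,2,\dots$ put $s_j = 1 + (j-1)(m-2)$ and write $x = u_j w_j v_j$ with $|u_j| = s_j-1$ and $|w_j| = m$, keeping those $j$ for which the window fits ($s_j + m - 1 \le n$); a short computation shows there are $\lfloor (n-2)/(m-2)\rfloor$ such windows. Applying the hypothesis $|D(w_j)|\ge d$ together with Claim~\ref{D} yields $\big|D(x)\cap[s_j+1..s_j+m-2]\big|\ge d$ for each retained $j$. The key design choice is that the stride $m-2$ equals the length of each interior interval $[s_j+1..s_j+m-2]$, so these intervals are pairwise disjoint (in fact consecutive) and all lie in $[2..n-1]$. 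Since $D(x)$ is a single fixed set of genuine idle positions of $x$, summing over the disjoint windows gives that the number of idle positions of $x$ is at least $|D(x)| \ge d\lfloor(n-2)/(m-2)\rfloor$, which together with the previous paragraph establishes the displayed inequality. Dividing by $n$ and using $\tfrac1n\lfloor(n-2)/(m-2)\rfloor\to\tfrac1{m-2}$ then gives the asserted bound.

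I expect the only delicate points to be bookkeeping. The main one is guaranteeing that importing idle positions from the windows does not secretly double-count them; this is exactly why the stride is forced to match the interior length, so that $D(x)$ may be intersected with all interior windows simultaneously. A secondary subtlety is that the tiling argument literally controls only the $\limsup$: to obtain the limit itself one invokes the standard fact that $\rho(n,2)/n$ converges (for instance via superadditivity of $\rho$ and Fekete's lemma), after which the per-length inequality bounds that limit by $(m-2-d)/(m-2)$.
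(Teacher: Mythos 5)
Your proposal is correct and is essentially the paper's own argument: both decompose the word into consecutive interior blocks of length $m-2$ (your windows of length $m$ at stride $m-2$ are exactly the paper's factorization $ay_1y_2\cdots$ with $|y_i|=m-2$), apply Claim~\ref{D} to each length-$m$ window to harvest $d$ disjoint idle positions per block, and conclude via injectivity of charging. You merely spell out the finite-word bookkeeping and the passage to the limit that the paper compresses into two lines.
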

\begin{proof}
  Let $y=ay_1y_2\cdots$ be an infinite binary word, where $a$ is a letter, and $|y_i|=m-2$ for each $i$. By Claim~\ref{D},
  each interval corresponding to a factor $y_i$ in $y$ contains at least $d$ idle positions. The claim follows.
\end{proof}

\subsection{Idle positions that are resistant to left extensions}
We further identify positions that remain idle when we consider ``only'' left extensions,
which only comes into play in Section~\ref{sec:omit_limit} to estimate the number of idle positions in a suffix of a word.
Formally, for any word $w$ we define the set $D'(w)$ of idle positions $k$ in $w$ falling into one of the following cases:
\begin{enumerate}[(A)]
  \item $k \in \max B(s)$, where $s$ is a left-closed period-maximal interval that is not a run.
  \item $k \in (B(s) \setminus \{ \max B(s) \})$, where $s$ is a period-maximal interval (which is possibly a run).
  \item $w[k-1..|w|] = \bar a a^+$ holds.
\end{enumerate}

Note that $D(w) \subseteq D'(w)$.
Since we do not consider right-extensions,
we can show the following claim, which is a bit stronger than Claim~\ref{D} for $D$.
\begin{claim} \label{D'pos}
Let $w$ and $u$ be arbitrary binary words. For any $k \in D'(w)$, $|u|+k \in D'(uw)$.
\end{claim}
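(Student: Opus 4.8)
The plan is to prove Claim~\ref{D'pos} by the same case analysis as Claim~\ref{D}, but exploiting the fact that we only perform a \emph{left} extension $w \mapsto uw$, which makes every case strictly easier because the right end of $w$ (and hence the breaking letter, the right-open/right-closed status, and all right-closed period-maximal intervals) is preserved. First I would fix $k \in D'(w)$ and record the general principle that a left extension can only enlarge a period-maximal interval to the left: any period-maximal interval $s=[i..j]$ of $w$ gives rise to a period-maximal interval $s'=[i'..|u|+j]$ of $uw$ with $i' \le |u|+i$ and with the \emph{same} period $p$ and the \emph{same} right end, so in particular $s'$ is a run iff $s$ is, and if $s$ is $a$-broken then $s'$ is still $a$-broken by the same letter $a=w[j+1]$. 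Moreover every Lyndon root of $s$ reappears as a Lyndon root of $s'$ shifted by $|u|$, and since $i' \le |u|+i$ the prohibition of the first position of $s'$ never removes one of these shifted roots; hence $\mathtt{Beg}(\lroots(s))+|u| \subseteq \mathtt{Beg}(\lroots(s'))$ and, crucially, $\max\B(s)+|u| = \max\B(s')$ because the right end is unchanged. This last equality is exactly what is missing in the right-extension setting and is what lets us avoid the awkward case (c) of $D(w)$.

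Next I would run the three cases. For Case~(A), $s$ is a left-closed period-maximal non-run; its left extension $s'$ is a (possibly left-open, but that is irrelevant) period-maximal non-run, and by the right-end invariance $\max\B(s')=\max\B(s)+|u|$, so $|u|+k$ is the top element of $\B(s')$ of a non-run and thus lies in $D'(uw)$ via Case~(A) again (if $s'$ stays left-closed) or, if $u$ is empty at the relevant position makes it left-open, still qualifies since being a non-run makes the whole of $\B(s')$ idle by Definition~\ref{def:lroots} and Lemma~\ref{lem1}. For Case~(B), $k \in \B(s)\setminus\{\max\B(s)\}$; then $|u|+k \in \B(s')$ and, because $\max\B(s')=\max\B(s)+|u|$ while $|u|+k < \max\B(s)+|u|$, we get $|u|+k \in \B(s')\setminus\{\max\B(s')\}$, i.e. Case~(B) for $uw$. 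For Case~(C), the condition $w[k-1..|w|]=\bar a a^+$ is a purely local right-suffix condition on $w$ that is clearly inherited by $uw$ as $(uw)[|u|+k-1..|uw|]=\bar a a^+$, so $|u|+k \in D'(uw)$ by Case~(C) directly; this case needs almost no argument.

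The main obstacle, and the only place requiring care, is pinning down the behavior of the right-open runs inside Case~(B): a right-open run $s=[i..|w|]$ in $w$ becomes the run $s'=[i'..|uw|]$ in $uw$, which is \emph{still right-open} since we did not append anything to the right. Because $s'$ is unbroken, $\lroots(s')$ is again defined with respect to both orders, so I must check that $\B(s)+|u| \subseteq \B(s')$ with the maximum preserved; this follows from the fact that the set of Lyndon roots with respect to either order is determined by the factor $w[i+1..|w|]$, which reappears verbatim (shifted) in $uw$, together with the invariance of the right end giving $\max\B(s')=\max\B(s)+|u|$. I would therefore state the right-end-invariance observation once as a short preliminary paragraph and then let all three cases fall out of it almost mechanically; no computer search or delicate combinatorial estimate is needed here, in contrast to Claim~\ref{D}.
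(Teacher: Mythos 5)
Your overall plan---the three-case analysis (A)/(B)/(C) with the shift-by-$|u|$ argument---is exactly the paper's, and your Cases (B) and (C) are correct. But your preliminary ``general principle'' contains a false assertion, and your Case (A) leans on it: it is \emph{not} true that $s'$ is a run iff $s$ is. A left extension preserves the period and the right end of a period-maximal interval, but it can lengthen the interval to the left, and run-ness is a condition on length versus period. Concretely, take $w=01$ and $u=01$: the interval $[1..2]$ is period-maximal in $w$ with period $2$ and is not a run, while its period-maximal extension in $uw=0101$ is $[1..4]$, which \emph{is} a run. So ``same period and same right end'' does not yield ``non-run stays non-run''.

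What actually rescues Case (A)---and what the paper's proof uses---is that there $s=[i..j]$ is \emph{left-closed}: the letter $w[i-1]$ breaking the period on the left survives in $uw$, so the period-maximal extension of $[|u|+i..|u|+j]$ in $uw$ is exactly $[|u|+i..|u|+j]$ itself. Hence $s'$ is still left-closed, has the same length, is therefore still not a run, and $\B(s')=\B(s)+|u|$, so $|u|+k=\max\B(s')$ and Case (A) applies in $uw$. Your hedge ``(possibly left-open, but that is irrelevant)'' shows this point was missed, and the fallback you offer for that branch is invalid twice over: if $s'$ really were left-open it would be strictly longer than $s$ and could well be a run, and in any case ``all of $\B(s')$ is idle'' does not place $|u|+k$ in $D'(uw)$---membership in $D'$ requires one of the defining cases (A)--(C) to hold, not mere idleness, since $D'$ is by construction a restricted, certifiably stable subset of the idle positions. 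The offending branch happens to be vacuous, but your proof neither shows that nor handles it correctly; adding the one-line observation that left-closedness is preserved under left extension closes the gap and makes your argument coincide with the paper's.
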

\begin{proof}
  We examine $k \in D'(w)$ of each case:
  \begin{itemize}
    \item For Case (A). Since $s=[i..j]$ is left-closed,
          we have a left-closed period-maximal interval $s' = [|u|+i..|u|+j]$ in $uw$.
          Since $s'$ is not a run, $|u|+k$ is in $D'(uw)$.
    \item For Case (B). Let $s = [i..j]$.
          We have a period-maximal interval $s' = [i'..|u|+j]$ with $i' \le |u|+i$ in $uw$.
          Since any Lyndon root in $\lroots(s)$ appears in $\lroots(s')$ (with shift $|u|$), $|u|+k$ is in $D'(uw)$.
    \item For Case (C). Since $\bar a a^+$ stays in a suffix of $uw$, $|u|+k$ in $D'(uw)$.
  \end{itemize}
\end{proof}

Considering that $1 \notin D'(w)$, we get:
\begin{claim} \label{D'num}
Let $w$ and $u$ be arbitrary binary words. Then
\begin{align*}
	\Big| D'(uw) \cap [|u|+2..|uw|] \Big| \geq  |D'(w)|.
\end{align*}
\end{claim}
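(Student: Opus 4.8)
The plan is to realize the desired inequality as a counting consequence of the shift-by-$|u|$ map being an injection from $D'(w)$ into $D'(uw)$ whose image lands inside the interval $[|u|+2..|uw|]$. Concretely, I would define $\phi : D'(w) \to D'(uw)$ by $\phi(k) = |u|+k$. Claim~\ref{D'pos} is exactly the statement that $\phi(k) \in D'(uw)$ for every $k \in D'(w)$, and $\phi$ is injective since it is a translation. So the only remaining work is to pin down the image precisely in $[|u|+2..|uw|]$.

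For the upper bound, every $k \in D'(w)$ is a position of $w$, hence $k \le |w|$ and $\phi(k) = |u|+k \le |u|+|w| = |uw|$. For the lower bound I would verify the observation flagged just before the statement, namely that $1 \notin D'(w)$, by inspecting the three defining cases. In Case~(A) the interval $s=[i..j]$ is left-closed, so $i \ge 2$, and Definition~\ref{def:lroots} forces every element of $\B(s)$ to start strictly after $i$; no such position can equal $1$. The same remark about Definition~\ref{def:lroots} disposes of Case~(B), where the starting positions of the Lyndon roots again strictly exceed $i \ge 1$. In Case~(C) the pattern $w[k-1..|w|]=\bar a a^+$ presupposes $k-1\ge 1$, so $k \ge 2$. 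Thus $k \ge 2$ for all $k \in D'(w)$, and $\phi(k) \ge |u|+2$.

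Combining the two bounds gives $\phi(D'(w)) \subseteq D'(uw) \cap [|u|+2..|uw|]$, and injectivity of $\phi$ then yields $|D'(uw)\cap[|u|+2..|uw|]| \ge |\phi(D'(w))| = |D'(w)|$, which is the claim. I do not anticipate any genuine obstacle: the content is entirely in Claim~\ref{D'pos}, and the only point requiring a moment of care is the bookkeeping that $1$ never lies in $D'(w)$. That in turn reduces to the elementary facts that Lyndon roots of a period-maximal interval begin strictly inside the interval, while the $\bar a a^+$ pattern of Case~(C) needs a position to its left.
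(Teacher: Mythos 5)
Your proposal is correct and follows exactly the paper's route: the paper derives this claim immediately from Claim~\ref{D'pos} (the shift map $k \mapsto |u|+k$ lands in $D'(uw)$) together with the observation that $1 \notin D'(w)$, which is precisely your argument, only spelled out in more detail (your case-by-case check that $1 \notin D'(w)$ is left implicit in the paper). Nothing is missing or different in substance.
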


\section{Computer search}
Given a positive integer $d$, Algorithm~\ref{algo:compute_m} computes the minimum integer $m_d$ such that 
$|D(w)| \ge d$ for any binary word $w$ of length $m_d$. 
The algorithm traverses words by appending characters to the right.
If $|D(w)| \ge d$, we stop the extension
since $|D(wv)| \ge d$ for any word $v$.

If we already know the value $m_{d'}$ for some $d' < d$, then the following pruning of the search space can be employed:
If $|D(w) \cap [1..m - m_{d'} + 1]| \ge d - d'$, then we stop the extension.
That is because for any word $z$ of length $m_{d'}$, $D(z)$ contains at least $d'$ positions (and $1\notin D(z)$), and hence,
for any word $v$ of length $m - |w|$, $D(wv)$ contains at least $d - d'$ positions in $[1..m - m_{d'} + 1]$ and at least $d'$ positions in $[m - m_{d'} + 2..m]$.
Namely, $D(wv)$ contains at least $d-d'+d'=d$ positions, and hence, any right extension of the current $w$ cannot lead to an update of $m$.

\begin{algorithm2e}[t]
  \caption{Computing $m_d$.}
  \label{algo:compute_m}
    \KwIn{A positive integer $d$.}
    \KwOut{Return $m_d$.}
    \SetKwFunction{Extend}{Extend}
    \SetKw{Procedure}{procedure}
\nl $m \leftarrow 0$; \tcp{Let $m$ be a global variable.}
\nl \Extend{$0$}\;
\nl \Return $m$\;

\BlankLine

\Procedure \Extend{$w$}\;
\nlset{1} \lIf{$|w| > m$}{$m \leftarrow |w|;$}

\nlset{2} compute $D(w)$\;
\nlset{3} \lIf{$|D(w)| \ge d$}{\Return;}\label{algoline:pruning}

\nlset{4} \ForEach{$a \in \{0, 1\}$}{
\nlset{5}   \Extend{$wa$}\;}
\end{algorithm2e}

By computing $m_d$ and using Lemma~\ref{lem:lim_ub}, we obtained upper bounds for $\lim_{n \rightarrow \infty}(\rho(n, 2)/n)$ given in Table~\ref{table:ub}.

\begin{table}[t]
  \caption[]{Upper bounds of $\lim_{n \rightarrow \infty}(\rho(n, 2)/n)$.}
  \label{table:ub}
   \centering{
    \begin{tabular}{|r|r|l|r|}
      $d$ & $m_d$ & $\lim_{n \rightarrow \infty}(\rho(n, 2) / n)$ & $m_{d} - m_{d-1}$ \\ \hline
      1   & 63    & 0.98360655737\ldots & 63 \\
      2   & 96    & 0.97872340425\ldots & 33 \\
      3   & 126   & 0.97580645161\ldots & 30 \\
      4   & 150   & 0.97297297297\ldots & 24 \\
      5   & 172   & 0.97058823529\ldots & 22 \\
      6   & 194   & 0.96875             & 22 \\
      7   & 216   & 0.96728971962\ldots & 22 \\
      8   & 237   & 0.96595744680\ldots & 21 \\
      9   & 258   & 0.96484375          & 21 \\
      10  & 274   & 0.96323529411\ldots & 16 \\
      11  & 295   & 0.96245733788\ldots & 21 \\
      12  & 314   & 0.96153846153\ldots & 19 \\
      13  & 332   & 0.96060606060\ldots & 18 \\
      14  & 351   & 0.95988538681\ldots & 19 \\
      15  & 369   & 0.95912806539\ldots & 18 \\
      16  & 388   & 0.95854922279\ldots & 19 \\
      17  & 407   & 0.95802469135\ldots & 19 \\
      18  & 425   & 0.95744680851\ldots & 18 \\
      19  & 444   & 0.95701357466\ldots & 19 \\
      20  & 462   & 0.95652173913\ldots & 18 \\\hline
    \end{tabular}
   }
\end{table}

\section{Upper bound for finite words}\label{sec:omit_limit}
We now prove that we can omit the limit in the bounds in Table~\ref{table:ub}.
That is, we verify that, for any $d \le 20$, $\rho(n, 2)/n \le (m_d-2-d)/(m_d-2)$ does hold for any $n$.

Let $y$ be a finite word and let $p_1$, $p_2$, \dots, $p_\ell$ be the list of idle positions of $y$. Note that $p_1=1$.
For a given $d$ we define 
\begin{align*}
s_k&=s_k(y,d):=[p_{(k-1)d+1}..p_{kd+1}-1] & &\text{for $k=1,2,\dots, \lceil\ell/d \rceil-1$},\\
s_k&=s_k(y,d):=[p_{(k-1)d+1}..|y|] & &\text{for $k=\lceil\ell/d \rceil$}\,.
\end{align*}
In other words, we make a disjoint decomposition of the interval $[1..|y|]$ into subintervals $s_k$ such that each $s_k$ starts with an idle position of $y$, and each $s_k$, except maybe the last one, contains exactly $d$ idle positions.

We first claim that all intervals $s_k$, $k<\lceil\ell/d \rceil$, have length at most $m_d-2$. Suppose that the length of some $s_k=[i..j]$ is at least $m_d-1$ and consider the word $y[i..j+1]$ of length $m_d$. By the definition of $m_d$ and by Claim~\ref{D}, the cardinality of $D(y)\cap [i+1..j]$ is at least $d$ which means that $[i..j]$ contains at least $d+1$ idle positions, a contradiction.

It remains to count idle positions in the tail of the word $y$, that is, in the interval $s_{\lceil\ell/d \rceil}$. By an argument similar to the one above, one can see that the length of the interval is at most $m_d-1$.
Let $z$ denote the suffix in question, that is, $z=y[p_{(\lceil\ell/d \rceil-1)d+1}..|y|]$.
Since we only have to consider left-extensions of $z$, we now use $D'(z)$ to estimate the number of idle positions.
Since $1\notin D'(z)$ and the first position of $s_{\lceil\ell/d \rceil}$ is idle in $y$, our goal is to show  
\begin{align}
	\frac{|z|-|D'(z)|-1}{|z|} < \frac{m_d-2-d}{m_d-2}\,. \tag{$*$}\label{dif}
\end{align}

Let $d=20$. Then the right hand of \eqref{dif} is $22/23$. We first note that $(x-1)/x<22/23$ for each $x< 23$. Therefore, we can assume $|z|\geq 23$.

A simple computer search verified that $|D'(w)| \ge 3$ for any word $w$ with $|w| \ge 13$,
which means there are at least $3$ idle positions in the last $12$ positions of $w$ that are resistant to left extensions.
Let now $z=z_1z_2$ with $|z_2| = 12$.
If $m_i-1 \leq |z_1| < m_{i+1}-1$ (where $m_0:=0$), then $z$ has at least $i$ idle positions in $[2..|z_1|]$ by Claim~\ref{D},
and hence, $|D'(z)| \ge i + 3$.
Using the results in Table~\ref{table:ub},
a direct calculation yields that, for each $i=0,1,\dots,19$, if $m_i-1 \leq |z_1| < m_{i+1}-1$, then
\begin{align*}
	\frac{|z|-|D'(z)|-1}{|z|}
	&\leq \frac{(m_{i+1}-2+12)-i-3-1}{m_{i+1}-2+12} <\frac{22}{23}\,.
\end{align*}

Therefore we get the following result.
\begin{align*}
 \rho(n, 2)/n & < \frac{22}{23} =
 0.\overline{9565217391304347826086}\,.
\end{align*}

 \section{Conclusion}
 Search for words with high number of runs in the literature yields words with approximately $0.944n$ runs, where $n=|w|$, see \cite{jamie, lower, web}. Therefore, the optimal multiplicative constant is somewhere between $0.944$ and $0.957$. The lower bound corresponds to words where on average about every 18th position is idle. This seems to fit very well with the eventual distances between $m_{d-1}$ and $m_{d}$ in Table~\ref{table:ub}. It is therefore reasonable to expect that the optimal density of runs is close to the lower bound, maybe around $1-1/18.5\approx 0.946$.

\end{document}